\newtheorem{assumption}{Assumption}
\newtheorem{proposition}{Proposition}
\newtheorem{definition}{Definition}
\newtheorem{corollary}{Corollary}
\DeclareMathOperator{\trace}{Tr}
\renewcommand{\S}{\mathcal{S}}
\title{Privacy-Preserving State Estimation with Crowd Sensors: An Information-Theoretic Respective}
\author{Farhad Farokhi\thanks{F.~Farokhi is with the Department of Electrical and Electronic Engineering at the University of Melbourne. e-mail: farhad.farokhi@unimelb.edu.au}}
\begin{document}

\maketitle

\begin{abstract}
    Privacy-preserving state estimation for linear time-invariant dynamical systems with crowd sensors is considered. At any time step, the estimator has access to measurements from a randomly selected sensor from a pool of sensors with pre-specified models and noise profiles. A Luenberger-like observer is used to fuse the  measurements with the underlying model of the system to recursively generate the state estimates. An additive privacy-preserving noise is used to constrain information leakage. Information leakage is measured via mutual information between the identity of the sensors and the state estimate conditioned on the actual state of the system. This captures an omnipotent adversary that not only can access state estimates but can also gather direct high-quality state measurements. Any prescribed level of information leakage is shown to be achievable by appropriately selecting the variance of the privacy-preserving noise. Therefore, privacy-utility trade-off can be fine-tuned.
\end{abstract}

\section{Introduction}
Crowd sensing is an emerging technique for estimation of variables of interest, such as traffic and pollution, using individuals' mobile devices capable of sensing and computing (e.g., smart phones), and has risen in popularity given prevalence of Internet of Things (IoT) devices and smart phones in everyday life~\cite{6069707,dutta2017towards, yan2017cloud}. Crowd sensors take local measurements that are fused together to construct accurate estimates of variables. However, the devices and their users can unintentionally leave fingerprints in the estimated values. For instance, if we acquire accurate measurements of traffic in a specific location, we can inevitably infer that a crowd sensing road user has been there. These information can be stitched together to infringe on privacy or security\footnote{For instance, aggregative heatmaps generated by Strava, a fitness tracking app, revealed the location of a secret US army base~\cite{gaurdian_strava}} of crowd sensing users. This has motivated development and analysis of private crowd sensing for data gathering and estimation~\cite{farokhi2016preserving, boutsis2013privacy, 7054716, huang2012privacy, 6480865, jin2016enabling}.
 
Privacy definition and analysis broadly fall within two category of differential privacy~\cite{dwork2006calibrating, dwork2014algorithmic} and information-theoretic privacy~\cite{wang2016relation,diaz2019robustness,issa2019operational}. Note that a third category based on anonymization, binning, and obfuscation has  been historically present in data science and statistics, but does not enjoy the strong guarantees of differential privacy and information-theoretic privacy~\cite{li2006t,sweeney2002k}. Differential privacy requires that reported outputs are relatively insensitive  to the data of any single individual; the extent of this insensitivity is reflected in a design parameter called privacy budget. This is often achieved by the use of additive noise~\cite{dwork2014algorithmic}. Information-theoretic privacy however focuses on systematically measuring private information leakage and developing optimal policies to constraint the leakage. A major difficulty with data privacy in time series or dynamical environments is the accumulation of privacy leakage over time, e.g., referred to as composition in differential privacy~\cite{dwork2014algorithmic}. This requires us to weaken the privacy guarantees with time~\cite{NIPS2013_c850371f}, to discount distant events~\cite{farokhi2020temporally}, to restrict number of releases~\cite{blocki2016differentially}, or to account for per-step information leakage~\cite{koufogiannis2017differential}.  This paper takes an information-theoretic path to dynamic privacy in crowd sensing.

In this paper, particularly, we consider privacy-preserving state estimation for linear time-invariant dynamical systems with crowd sensors. Each sensor has a specific model (what it measures) and noise profile (variance of measurement noise). At any given time, a single sensor is selected at random from a pool of heterogeneous sensors and provides an output measurement to the estimator. The estimator then uses a Luenberger-like observer to fuse these state measurements with the underlying model of the system to generate the state estimates in real time. The estimator may also add some noise to the state estimate to mask the identity of the sensors contributing to the estimation. Note that the estimated state or its accuracy can be used to identify the participating sensors. For instance, if two sensors one with large noise (e.g., using an outdated equipment) and one with small noise (e.g., using state-of-the-art equipment) contribute measurements to state estimation, we can identify the time instances in which the accurate sensor is used based on the quality of the estimation. This can be done by comparing the estimate with ground truth (when the adversary is omnipotent) or using the covariance matrix of the estimator (if shared by the estimator). We  measure private information leakage regarding the sequence of sensors used via mutual information between identity of the sensors and the state estimate conditioned on the actual state of the system. By conditioning on the actual state, we are modeling a very powerful adversary that not only can access the state estimates but can also gather direct high-quality state measurements. This would provide an upper bound on the actual information leakage in any weaker alternative scenario (i.e., realistic experimental scenarios) and is thus useful for privacy analysis in adversarial or safety-critical settings. Variance of the state estimate also enters the problem formulation as a measure of utility. We provide a bound for the measure of information leakage as a function of the sensor characteristics, estimation quality, and variance of the additive privacy-preserving noise. We show that we can achieve any prescribed level of information leakage, i.e., requested privacy guarantee, by appropriately selecting the variance of the additive noise. We can therefore fine-tune privacy-utility trade-off using the additive privacy-preserving noise. 

The rest of the paper is organized as follows. This section finishes with a brief notation overview. We present the problem formulation with measures of information leakage and utility in Section~\ref{sec:problem}. The privacy analysis and privacy-utility trade-off are presented in Section~\ref{sec:result}. A numerical example to illustrate the results is presented in Section~\ref{sec:numerical}. Finally, Section~\ref{sec:conc} presents some concluding remarks and avenues for future research.

\subsection{Notation}
The sets of real, integer, and natural numbers are denoted by $\mathbb{R}$, $\mathbb{Z}$, and $\mathbb{N}$, respectively. Let $\mathbb{N}_0=\mathbb{N}\cup\{0\}$. 
We write $X\succ 0$ ($X\succeq 0$) if $X$ is a symmetric positive definite (semi-definite) matrix. For any sequence of variables $x[n],\dots,x[m]\in \mathbb{X}$ with $m\geq n$, we use the notation $x[n:m]=(x[n],\dots,x[m])\in\mathbb{X}^{m-n+1}$, where $\mathbb{X}^{d}$ is the $d$-fold Cartesian product of the set $\mathbb{X}$ for any $d\in\mathbb{N}$.

\section{Problem Formulation} \label{sec:problem}
Consider linear time-invariant discrete-time system
\begin{align} \label{eqn:system}
    \mathbf{x}[k+1]=A\mathbf{x}[k]+\mathbf{w}[k],\quad  \forall k\in\mathbb{N}_0,
\end{align}
where $\mathbf{x}[k]\in\mathbb{R}^n$ is the state and $\mathbf{w}[k]\in\mathbb{R}^n$ is the process noise. 

\begin{assumption} \label{assum:process_noise} The process noise $(\mathbf{w}[k])_{k\in\mathbb{N}_0}$ is a sequence of identically and independently distributed (i.i.d.) zero-mean Gaussian random variables with covariance $W\succeq 0$. 
\end{assumption}

\begin{assumption} \label{assum:inti_condition} The initial condition $\mathbf{x}[0]$ is a zero-mean Gaussian random variable with covariance $X_0\succeq 0$. 
\end{assumption}

The state of the system in~\eqref{eqn:system} is measured by a set of sensors $\S:=\{1,\dots,m\}$. At any given time, the state estimator has access to a measurement from a randomly selected sensor. This can be viewed as an abstraction of crowd-sensing, i.e., a group of sensors provides state measurements at various time instants, one at a time, to the operator. For instance, the sensors could be vehicles traveling over a transportation network, where each one provides a measurement of the traffic flow in their vicinity. Let $\mathbf{s}[k]\in \S$ denote the identity of the sensor that provides a measurement at time $k\in\mathbb{N}_0$, i.e., $\mathbf{s}[k]=i$ if sensor $i\in \S$ provides the state measurement at time $k$. If $\mathbf{s}[k]=i$, we have access to state measurements of the form:
\begin{align} \label{eqn:output}
    \mathbf{y}[k]=C_{i}\mathbf{x}[k]+\mathbf{v}_{i}[k],
\end{align}
where $\mathbf{y}[k]\in\mathbb{R}^{p_i}$ is the sensing output and $\mathbf{v}_{i}[k]\in\mathbb{R}^{p_i}$ is the measurement noise.

\begin{assumption} \label{assum:meas_noise}
    For each sensor $i\in \S$, the measurement noise $(\mathbf{v}_i[k])_{k\in\mathbb{N}_0:\mathbf{s}[k]=i}$ is a sequence of i.i.d. zero-mean Gaussian random variables with covariance $V_i\succeq 0$. 
\end{assumption}

The noise in~\eqref{eqn:output}, formalized in Assumption~\ref{assum:meas_noise}, can be caused by instrumentation inaccuracies or can be artificially added for privacy-preserving purposes. 

\begin{assumption} \label{assum:sensor_select}
    The sensor selection strategy is i.i.d., i.e., for all $s,s[0],\dots,s[k-1]\in\S$,
    \begin{align}
        \mathbb{P}\{\mathbf{s}[k]\!=\!s\,|\,\mathbf{s}[0\!:\!k\!-\!1]\!=\!s[0\!:\!k\!-\!1]\}\!=\!\mathbb{P}\{\mathbf{s}[k]\!=\!s\}\!=\!p(s).
    \end{align}
\end{assumption}

Based on the measurements in~\eqref{eqn:output}, we can construct Luenberger-like state estimator of the form:
\begin{align} \label{eqn:estimator}
    \mathbf{\hat{x}}[k+1]=A\mathbf{\hat{x}}[k]+L(\mathbf{y}[k]-C_{\mathbf{s}[k]}\mathbf{\hat{x}}[k])+\boldsymbol{\xi}[k],
\end{align}
where $\boldsymbol{\xi}[k]\in\mathbb{R}^{n}$ is a privacy-preserving noise that the system operator can add to its state estimate to protect the identity of the sensors used so far, i.e., $\mathbf{s}[0:k]$, for constructing the state estimate. 

\begin{assumption}
    The privacy-preserving noise $(\boldsymbol{\xi}[k])_{k\in\mathbb{N}_0}$ is a sequence of i.i.d. zero-mean Gaussian random variables with covariance $\Xi\succeq 0$. 
\end{assumption}

The estimation error is then given by
\begin{align*}
    \mathbf{e}[k+1]
    :=&\mathbf{\hat{x}}[k+1]-\mathbf{x}[k+1]\\
    =&(A-LC_{\mathbf{s}[k]})\mathbf{e}[k]+L\mathbf{v}_{\mathbf{s}[k]}[k]+\boldsymbol{\xi}[k]-\mathbf{w}[k].
\end{align*}
The performance of the estimator in~\eqref{eqn:estimator} can be measured by
\begin{align} \label{eqn:performance_estimate}
    \mathfrak{P}(\Xi):=\lim_{k\rightarrow \infty}\frac{1}{T+1} \sum_{k=0}^{T} \mathbb{E}\{\mathbf{e}[k]^\top \Omega \mathbf{e}[k] \},
\end{align}
where $\Omega\succeq 0$ is a weighting matrix. Intuitively, the best estimation performance, i.e., smallest $\mathfrak{P}(\Xi)$, can be achieved by incorporating the smallest ``amount'' of noise, i.e., setting $\Xi=0$. However, that would result in a potentially larger privacy leakage, i.e., incorporating no privacy-preserving noise makes it easier to infer $\mathbf{s}[0:k]$ from $\mathbf{\hat{x}}[0:k]$ and $\mathbf{x}[0:k]$. To formalize this, we need to define a measure of private information leakage:
\begin{align} \label{eqn:private_info}
    \mathfrak{I}(\Xi):= \lim_{k\rightarrow \infty}\frac{1}{k+1} I(\mathbf{s}[0:k];\mathbf{\hat{x}}[0:k]|\mathbf{x}[0:k]),
\end{align}
where $I(\cdot,\cdot|\cdot)$ is the conditional mutual information~\cite[p.\,251]{coverelements}. Mutual information has been widely used in the privacy literature as a measure of private information leakage~\cite{makhdoumi2014information, farokhi2016privacy, murguia2021privacy, 1010079783031}. 

\begin{definition}[$\epsilon$-Private Estimation] The estimator in~\eqref{eqn:estimator}, with privacy-preserving noise $(\boldsymbol{\xi}[k])_{k\in\mathbb{N}_0}$, is $\epsilon$-private if $\mathfrak{I}(\Xi)\leq \epsilon$.    
\end{definition}

In this paper, our primary objective is to develop privacy-preserving state estimation policies, by computing covariance of privacy-preserving noise $\Xi$, to achieve $\epsilon$-privacy for all $\epsilon>0$. As a secondary objective, we would like to drive utility-privacy trade-off. We will investigate these problems in the next section. 

\section{Results} \label{sec:result}
Our first result is regarding the performance of the state estimator in~\eqref{eqn:estimator} as a function of the privacy-preserving noise. 

\begin{proposition} \label{prop:upperbound:performance} 
The performance of the estimator in~\eqref{eqn:estimator} is 
\begin{align*}
\mathfrak{P}(\Xi)=\trace(\Omega E[k]),    
\end{align*}
where $E[k]:=\mathbb{E}\{\mathbf{e}[k]\mathbf{e}[k]^\top \}$ computed recursively as
\begin{align}
    E[k\!+\!1]=&\mathbb{E}\{(A\!-\!LC_{\mathbf{s}[k]})E[k] (A\!-\!LC_{\mathbf{s}[k]})^\top\}\nonumber\\
    &+L\overline{V}L^\top+\Xi+W,
\end{align}
with $\overline{V}=\sum_{s\in\mathcal{S}}V_sp(s)$. Furthermore, if $A\!-\!LC_{i}$ are Schur matrices (i.e., all their eigenvalues reside within the unit disk) for all $i\in\S$, $\lim_{k\rightarrow \infty}E[k]=E^*$. 
\end{proposition}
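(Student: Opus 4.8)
\section*{Proof proposal}

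The plan is to establish the two claims in turn: first the exact second-moment recursion together with the trace formula, and then the existence of the limit $E^*$. For the recursion I would start from the error dynamics $\mathbf{e}[k+1]=(A-LC_{\mathbf{s}[k]})\mathbf{e}[k]+L\mathbf{v}_{\mathbf{s}[k]}[k]+\boldsymbol{\xi}[k]-\mathbf{w}[k]$, form the outer product $\mathbf{e}[k+1]\mathbf{e}[k+1]^\top$, and take expectations. The key observation is a clean independence separation: $\mathbf{e}[k]$ is a deterministic function of $\mathbf{x}[0]$ and of the noises and selections strictly before time $k$, whereas the ``fresh'' quantities $\mathbf{s}[k]$, $\mathbf{v}_{\mathbf{s}[k]}[k]$, $\boldsymbol{\xi}[k]$, $\mathbf{w}[k]$ are all independent of that history by the standing i.i.d.\ and independence assumptions, and the three noise terms are zero-mean and mutually independent. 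Hence every cross term between $(A-LC_{\mathbf{s}[k]})\mathbf{e}[k]$ and a noise term, and every cross term among the noise terms, vanishes in expectation. Conditioning on $\mathbf{s}[k]=i$ and averaging with weights $p(i)$ then turns the quadratic term into $\mathbb{E}\{(A-LC_{\mathbf{s}[k]})E[k](A-LC_{\mathbf{s}[k]})^\top\}$ (using independence of $\mathbf{s}[k]$ and $\mathbf{e}[k]$, so the conditional second moment of $\mathbf{e}[k]$ is still $E[k]$) and the measurement-noise term into $\sum_{i\in\S}p(i)LV_iL^\top=L\overline{V}L^\top$, yielding the stated recursion with constant term $Q:=L\overline{V}L^\top+\Xi+W\succeq 0$.

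The trace formula is immediate from $\mathbb{E}\{\mathbf{e}[k]^\top\Omega\mathbf{e}[k]\}=\trace(\Omega\,\mathbb{E}\{\mathbf{e}[k]\mathbf{e}[k]^\top\})=\trace(\Omega E[k])$. Once convergence $E[k]\to E^*$ is in hand, I would note that $\mathfrak{P}(\Xi)$ is a Ces\`aro average of the sequence $\trace(\Omega E[k])$, so it inherits the limit $\trace(\Omega E^*)$; this is what legitimizes writing $\mathfrak{P}(\Xi)=\trace(\Omega E^*)$ rather than a $k$-dependent quantity.

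For the limit, the plan is to read the recursion as an affine map on symmetric matrices, $E[k+1]=\mathcal{L}(E[k])+Q$, with the positive (Lyapunov-type) operator $\mathcal{L}(X):=\sum_{i\in\S}p(i)(A-LC_i)X(A-LC_i)^\top$. Vectorizing gives $\mathrm{vec}(E[k+1])=M\,\mathrm{vec}(E[k])+\mathrm{vec}(Q)$ with $M:=\sum_{i\in\S}p(i)\,(A-LC_i)\otimes(A-LC_i)$, and the iteration converges to the unique fixed point $\mathrm{vec}(E^*)=(I-M)^{-1}\mathrm{vec}(Q)$ exactly when $\rho(M)<1$.

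The hard part is precisely establishing $\rho(M)<1$ from the hypothesis that each $A-LC_i$ is Schur, and I expect this to be the main obstacle because per-mode Schur-ness of the individual closed-loop matrices does not by itself control the averaged operator $\mathcal{L}$: the scalar case $M=\sum_{i}p(i)(a-\ell c_i)^2<\sum_i p(i)=1$ is misleadingly benign, whereas in the matrix case the cross-mixing of modes is exactly the mean-square-stability question for i.i.d.\ switched systems, where individual stability of each mode can fail to give stability in mean square. The clean route I would pursue is a common Lyapunov certificate: exhibit $P\succ 0$ with $\sum_{i\in\S}p(i)(A-LC_i)^\top P(A-LC_i)\prec P$, which makes $\mathcal{L}$ a strict contraction in the $P$-weighted inner product and simultaneously delivers $\rho(M)<1$, monotone boundedness of $E[k]$ inside the positive semidefinite cone, and the fixed point $E^*$. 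I would flag that some structure beyond ``each $A-LC_i$ is Schur'' (a common quadratic Lyapunov function, or directly the spectral condition $\rho(M)<1$) appears to be what the convergence genuinely rests on, and pinning down the minimal hypothesis actually used is the delicate point of the argument.
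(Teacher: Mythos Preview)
Your derivation of the recursion and the trace formula matches the paper's proof line-for-line: expand $\mathbf{e}[k+1]\mathbf{e}[k+1]^\top$, use independence and zero mean of $\mathbf{s}[k]$, $\mathbf{v}_{\mathbf{s}[k]}[k]$, $\boldsymbol{\xi}[k]$, $\mathbf{w}[k]$ relative to the past to kill the cross terms, and average over $\mathbf{s}[k]$ to get $\overline{V}$.

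For the convergence, the paper does less than you do: it simply asserts in one sentence that the map $E\mapsto\mathbb{E}\{(A-LC_{\mathbf{s}[k]})E(A-LC_{\mathbf{s}[k]})^\top\}$ is contractive whenever each $A-LC_i$ is Schur, and invokes the Banach fixed-point theorem. Your vectorization to $M=\sum_{i}p(i)(A-LC_i)\otimes(A-LC_i)$ and your worry about whether per-mode Schur stability forces $\rho(M)<1$ are exactly on point; that implication fails in general (e.g., two nilpotent modes $F_1=\begin{bmatrix}0&a\\0&0\end{bmatrix}$, $F_2=F_1^\top$ with $a>\sqrt{2}$ and $p_1=p_2=\tfrac12$ give $\mathcal{L}(I)=\tfrac{a^2}{2}I\succ I$). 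So the paper's justification is precisely the gap you flagged, and your proposed remedy---a common quadratic Lyapunov certificate, or directly hypothesizing $\rho(M)<1$---is the correct strengthening. In short, your approach coincides with the paper's where the paper is right and is more rigorous where the paper is terse.
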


\begin{proof}
See Appendix~\ref{proof:prop:upperbound:performance}.
\end{proof}

Computing an exact and explicit formula for information leakage is rather difficult due to the complicated nature of the underlying random variables (note the mixing of Gaussian variables entering the estimator due to random selection of sensors). Therefore, in the following proposition, we derive an upper bound for the information leakage. This bound can be used to derive sufficient conditions for achieving $\epsilon$-privacy. 

\begin{proposition} \label{prop:upperbound:information} 
Assume that $\lim_{k\rightarrow \infty}E[k]=E^*$. Then,
    \begin{align*}
        \mathfrak{I}(\Xi)
        \leq &\frac{1}{2}
        \ln(\det(L\overline{V}L^\top+\Xi
        +L\mathbb{E}\{\Delta C E^*\Delta C \}L^\top))\\
        &-\frac{1}{2}\sum_{s\in\mathcal{S}}p(s)\ln(\det(LV_{s}L^\top+\Xi))
    \end{align*}
    where $\overline{V}=\sum_{s\in\mathcal{S}}V_sp(s)$. 
\end{proposition}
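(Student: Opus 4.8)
The plan is to work with the estimation error $\mathbf{e}[k]=\hat{\mathbf{x}}[k]-\mathbf{x}[k]$ rather than the estimate itself. Since the conditioning is on the entire trajectory $\mathbf{x}[0:k]$ and the map $\hat{\mathbf{x}}[0:k]\mapsto\hat{\mathbf{x}}[0:k]-\mathbf{x}[0:k]=\mathbf{e}[0:k]$ is invertible for each fixed $\mathbf{x}[0:k]$, conditional mutual information is preserved and $I(\mathbf{s}[0:k];\hat{\mathbf{x}}[0:k]\mid\mathbf{x}[0:k])=I(\mathbf{s}[0:k];\mathbf{e}[0:k]\mid\mathbf{x}[0:k])$. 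I would then write this as a difference of conditional differential entropies, $h(\mathbf{e}[0:k]\mid\mathbf{x}[0:k])-h(\mathbf{e}[0:k]\mid\mathbf{s}[0:k],\mathbf{x}[0:k])$, and expand each by the entropy chain rule in time. The crucial structural observation is that conditioning on $\mathbf{x}[0:k]$ renders the process-noise samples $\mathbf{w}[0:k-1]$ deterministic, since $\mathbf{w}[j]=\mathbf{x}[j+1]-A\mathbf{x}[j]$; hence in the recursion $\mathbf{e}[j]=(A-LC_{\mathbf{s}[j-1]})\mathbf{e}[j-1]+L\mathbf{v}_{\mathbf{s}[j-1]}[j-1]+\boldsymbol{\xi}[j-1]-\mathbf{w}[j-1]$ the only conditional randomness, given $\mathbf{e}[0:j-1]$, comes from $\mathbf{s}[j-1]$, $\mathbf{v}_{\mathbf{s}[j-1]}[j-1]$, and $\boldsymbol{\xi}[j-1]$.

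For the sensor-conditioned entropy the computation is exact. Given $\mathbf{e}[0:j-1]$, $\mathbf{s}[0:k]$, and $\mathbf{x}[0:k]$, the increment $\mathbf{e}[j]$ is an affine function of the independent Gaussians $\mathbf{v}_{\mathbf{s}[j-1]}[j-1]$ and $\boldsymbol{\xi}[j-1]$ with a deterministic offset, so its conditional entropy equals $\tfrac12\ln((2\pi e)^n\det(LV_{\mathbf{s}[j-1]}L^\top+\Xi))$; averaging over $\mathbf{s}[j-1]\sim p$ (independent of $\mathbf{e}[0:j-1]$ and $\mathbf{x}[0:k]$) and summing the chain rule contributes $\sum_{s}p(s)\,\tfrac12\ln((2\pi e)^n\det(LV_sL^\top+\Xi))$ per step, plus an $\mathbf{e}[0]$ boundary term.

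The hard part is the unconditioned entropy $h(\mathbf{e}[0:k]\mid\mathbf{x}[0:k])$, because without knowing $\mathbf{s}[j-1]$ the increment is a Gaussian mixture with no closed-form entropy, which is exactly why only an upper bound is claimed. I would bound each chain-rule term by the maximum-entropy inequality: for each fixed realization of the conditioning, $h(\mathbf{e}[j]\mid\mathbf{e}[0:j-1],\mathbf{x}[0:k])\le\tfrac12\ln((2\pi e)^n\det\Sigma)$, where $\Sigma$ is the conditional covariance. Writing $C_{\mathbf{s}[j-1]}=\overline{C}+\Delta C$ with $\overline{C}=\sum_s p(s)C_s$ and $\Delta C=C_{\mathbf{s}[j-1]}-\overline{C}$, a direct second-moment calculation (using $\mathbb{E}\{\Delta C\}=0$ and the conditional zero-mean of $\mathbf{v}_{\mathbf{s}[j-1]}[j-1]$, which annihilates the cross term) gives $\Sigma=L\overline{V}L^\top+\Xi+L\,\mathbb{E}\{\Delta C\,\mathbf{e}[j-1]\mathbf{e}[j-1]^\top\Delta C^\top\}L^\top$ for the fixed $\mathbf{e}[j-1]$, the extra term being the covariance injected by the unknown choice of sensor.

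Finally, I would average over the conditioning and invoke concavity of $\ln\det$ (Jensen's inequality) to pull the expectation inside the determinant, replacing $\mathbf{e}[j-1]\mathbf{e}[j-1]^\top$ by $E[j-1]=\mathbb{E}\{\mathbf{e}[j-1]\mathbf{e}[j-1]^\top\}$ and yielding the per-step bound $\tfrac12\ln((2\pi e)^n\det(L\overline{V}L^\top+\Xi+L\,\mathbb{E}\{\Delta C\,E[j-1]\,\Delta C^\top\}L^\top))$. Subtracting the two expansions cancels the $(2\pi e)^n$ factors and leaves a sum of per-step differences; dividing by $k+1$, the $\mathbf{e}[0]$ boundary terms are bounded (indeed vanish, as $\mathbf{e}[0]$ is independent of the sensors given $\mathbf{x}[0:k]$) and disappear in the limit, while $E[j]\to E^*$ from Proposition~\ref{prop:upperbound:performance} makes the Cesàro average of the convergent per-step bounds converge to its limit, producing exactly the stated inequality. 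The only genuine obstacle is the Gaussian-mixture step, which the max-entropy bound together with $\ln\det$-concavity resolves cleanly; everything else is bookkeeping with the chain rule and the vanishing boundary terms.
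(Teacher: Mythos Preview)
Your proposal is correct and follows essentially the same route as the paper: a chain-rule decomposition in time, exact evaluation of the sensor-conditioned entropy as $\tfrac12\ln((2\pi e)^n\det(LV_sL^\top+\Xi))$, a maximum-entropy (Gaussian) bound on the mixture term followed by Jensen on $\ln\det$ to replace $\mathbf{e}[j-1]\mathbf{e}[j-1]^\top$ by $E[j-1]$, and passage to the limit via $E[j]\to E^*$. The only cosmetic differences are that you first pass from $\hat{\mathbf{x}}$ to $\mathbf{e}$ (a conditional bijection) and explicitly observe that $\mathbf{w}[j]$ is determined by $\mathbf{x}[0:k]$, whereas the paper works directly with $\hat{\mathbf{x}}$ and reduces each chain-rule term to $I(\mathbf{s}[t-1];\hat{\mathbf{x}}[t]\mid\hat{\mathbf{x}}[t-1],\mathbf{x}[t-1])$ before bounding it; the underlying computations coincide.
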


\begin{proof}
See Appendix~\ref{proof:prop:upperbound:information}.
\end{proof}

The upper bound in Proposition~\ref{proof:prop:upperbound:information} can be used to show that $\epsilon$-privacy is achievable for all choices of $\epsilon>0$ by simply increasing the covariance of the privacy-preserving noise.  

\begin{corollary}\label{cor:achieve}
Assume that $\lim_{k\rightarrow \infty}E[k]=E^*$. Then, there exists $\Xi\succeq 0$ such that the estimator in~\eqref{eqn:estimator} is $\epsilon$-private for all $\epsilon>0$.
\end{corollary}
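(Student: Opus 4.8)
The plan is to use the upper bound established in Proposition~\ref{prop:upperbound:information} and show that its right-hand side can be driven below any prescribed $\epsilon>0$ by an appropriate choice of $\Xi$. Since the corollary assumes $\lim_{k\to\infty}E[k]=E^*$, Proposition~\ref{prop:upperbound:information} applies, and, abbreviating the fixed symmetric matrices $M:=L\overline{V}L^\top+L\mathbb{E}\{\Delta C E^*\Delta C\}L^\top$ and $N_s:=LV_sL^\top\succeq 0$, it gives
\begin{align*}
\mathfrak{I}(\Xi)\leq \frac{1}{2}\ln\det(M+\Xi)-\frac{1}{2}\sum_{s\in\S}p(s)\ln\det(N_s+\Xi)=:B(\Xi).
\end{align*}
It therefore suffices to exhibit, for each $\epsilon>0$, some $\Xi\succeq 0$ with $B(\Xi)\leq \epsilon$, and then $\mathfrak{I}(\Xi)\leq B(\Xi)\leq\epsilon$ delivers $\epsilon$-privacy.

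First I would restrict attention to the one-parameter isotropic family $\Xi=\sigma^2 I$ with $\sigma>0$, which is admissible since $\sigma^2 I\succeq 0$ and, for $\sigma$ large enough, guarantees $M+\sigma^2 I\succ 0$ and $N_s+\sigma^2 I\succ 0$, so that all determinants and logarithms are well defined. Diagonalizing each symmetric matrix and using $\ln\det(\cdot)=\sum_i\ln(\cdot)$ over eigenvalues $\lambda_i(\cdot)$, I would write
\begin{align*}
\ln\det(M+\sigma^2 I)&=n\ln\sigma^2+\sum_{i=1}^n\ln\Big(1+\tfrac{\lambda_i(M)}{\sigma^2}\Big),\\
\ln\det(N_s+\sigma^2 I)&=n\ln\sigma^2+\sum_{i=1}^n\ln\Big(1+\tfrac{\lambda_i(N_s)}{\sigma^2}\Big).
\end{align*}

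The key observation is that, because $\sum_{s\in\S}p(s)=1$, the divergent leading terms $n\ln\sigma^2$ cancel exactly in $B(\sigma^2 I)$, leaving
\begin{align*}
B(\sigma^2 I)=\frac{1}{2}\sum_{i=1}^n\ln\Big(1+\tfrac{\lambda_i(M)}{\sigma^2}\Big)-\frac{1}{2}\sum_{s\in\S}p(s)\sum_{i=1}^n\ln\Big(1+\tfrac{\lambda_i(N_s)}{\sigma^2}\Big).
\end{align*}
Every summand tends to $\ln 1=0$ as $\sigma\to\infty$, so $\lim_{\sigma\to\infty}B(\sigma^2 I)=0$. Hence, given any $\epsilon>0$, I would pick $\sigma$ large enough that $B(\sigma^2 I)\leq \epsilon$ and set $\Xi=\sigma^2 I$, completing the argument.

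The main obstacle is precisely ensuring the exact cancellation of the $n\ln\sigma^2$ growth; without it the bound would itself diverge as the noise grows, defeating the purpose. This cancellation relies on two ingredients: using a common isotropic scale $\sigma^2 I$ so that the logarithmic growth carries the identical coefficient $n$ in every $\ln\det$ term, and the normalization $\sum_{s\in\S}p(s)=1$ of the sensor-selection distribution. A non-isotropic $\Xi$ would force one to match the growth directions of $M$ and each $N_s$ and is unnecessary here; the isotropic choice makes the cancellation transparent and shows that arbitrarily strong privacy is attainable at the expense of the utility $\mathfrak{P}(\Xi)$ in Proposition~\ref{prop:upperbound:performance}, which grows with $\trace(\Omega\,\Xi)$.
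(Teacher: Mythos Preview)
Your proposal is correct and follows the same strategy as the paper: take $\Xi=\lambda I$ and show that the upper bound from Proposition~\ref{prop:upperbound:information} vanishes as $\lambda\to\infty$. The only difference is cosmetic bookkeeping---the paper combines the two log-determinants and invokes $\ln\det(X)\leq\trace(X-I)$ to extract an explicit $O(\lambda^{-1})$ rate, whereas you expand in eigenvalues and rely on the cancellation of the common $n\ln\sigma^2$ term---but the underlying idea is identical.
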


\begin{proof}
See Appendix~\ref{proof:cor:achieve}.
\end{proof}

In the next section, we demonstrate these results for a simple discrete-time system with two states. 


\begin{figure}
    \centering
    \begin{tikzpicture}
        \node[] at (0,0) {\includegraphics[width=1\linewidth]{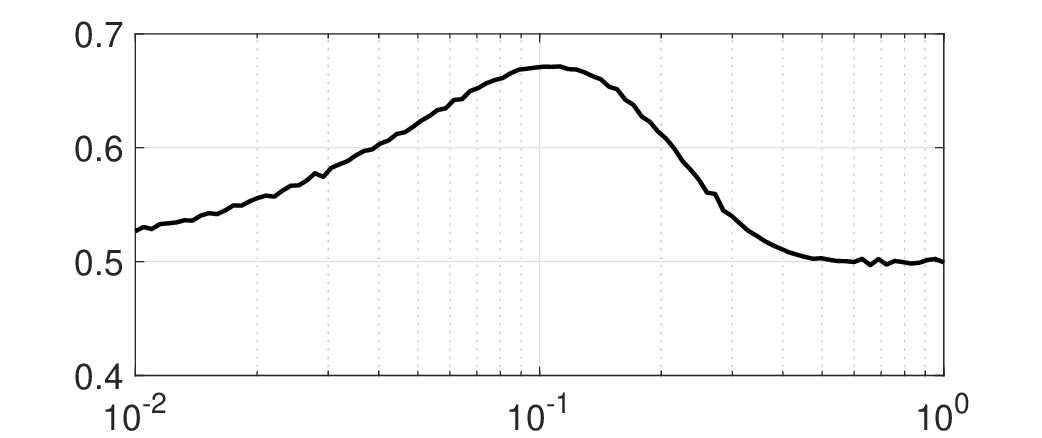}};
        \node[rotate=90] at (-4.1,0) {$\mathbb{P}\{\hat{\mathbf{s}}[k]=\mathbf{s}[k]\}$};
        \node[] at (0,-2.1) {$\tau$};
    \end{tikzpicture}
    \vspace*{-7mm}
    \caption{Probability of correctly detecting identity of the sensor at each time step $\mathbb{P}\{\hat{\mathbf{s}}[k]=\mathbf{s}[k]\}$ versus threshold $\tau$. }
    \label{fig:TP_vs_tau}

    \begin{tikzpicture}
        \node[] at (0,0) {\includegraphics[width=1\linewidth]{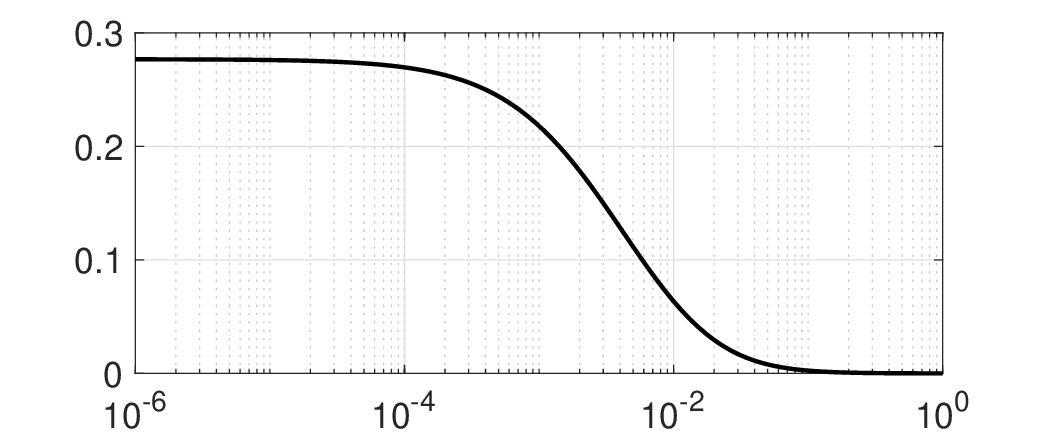}};
        \node[rotate=90] at (-4.1,0) {upper bound for $\mathfrak{I}(\Xi)$};
        \node[] at (0,-2.1) {$\sigma_\xi$}; 
    \end{tikzpicture}
    \vspace*{-7mm}
    \caption{Upper bound for private information leakage $\mathfrak{I}(\Xi)$ developed in Proposition~\ref{prop:upperbound:information} versus magnitude of privacy-preserving noise $\sigma_\xi$.}
    \label{fig:upper_bound}
\end{figure}

\section{Numerical Illustration} \label{sec:numerical}
In this section, we illustrate the results of the paper on a simple linear time-invariant discrete-time system modeling temperature dynamics of two interconnected rooms. Define
\begin{align*}
    \mathbf{x}[k]
    =
    \begin{bmatrix}
        T_1[k]-T_{\rm out}[k]
        \\
        T_2[k]-T_{\rm out}[k]
    \end{bmatrix},
\end{align*}
where $T_i[k]$ is the temperature in room $i=1,2$ and $T_{\rm out}[k]$ is the outdoor temperature. Each room exchanges heat with the other room and with the outside environment. Additionally, there are exogenous heat inputs (due to, e.g., people moving in and out of the rooms) that are modeled by process noise $\mathbf{w}[k]$ with zero mean and covariance $W=10^{-4} I$. Heat transfer is governed by linear resistive coupling, that is, heat flows proportional to temperature differences and inversely proportional to thermal resistances between rooms and outside. Thermal capacitance of each room determines the energy needed to change its temperature. As an example, the dynamics can be described by the discrete-time linear system in~\eqref{eqn:system} with 
\begin{align*}
A
=
\begin{bmatrix}
0.991 & 0.0075
\\
0.006 & 0.990
\end{bmatrix}.
\end{align*}
We assume that we can select uniformly at random from a pool of two sensors (the crowd sensors) with models
\begin{align*}
    C_{S_1}=C_{S_2}=C=\begin{bmatrix}
        1 & 0
    \end{bmatrix},
    V_1=10^{-1}, V_2=10^{-2}.
\end{align*}
We use the estimator in~\eqref{eqn:estimator} with 
\begin{align*}
    L=\begin{bmatrix}
        0.5 & 0
    \end{bmatrix}^\top , \quad 
    \Xi=\begin{bmatrix}
        \sigma_\xi & 0 \\ 0 & 10^{-32}
    \end{bmatrix},
\end{align*}
where we can select $\sigma_\xi$ to attain a certain level of privacy. Note that we do not need to add much noise to the estimate of the second state as it is not directly measured and thus does not reveal much about the identity of the sensor. Nonetheless, we use a non-zero covariance for the second state to avoid numerical issues with the upper bound developed in Proposition~\ref{prop:upperbound:information} (as otherwise the determinant of the matrices will be equal to zero and we end up with logarithm of zero being subtracted from logarithm of zero, which is ill-defined). Note that sensor $2$ is far superior to sensor $1$ (as $V_2/V_1=0.1\ll 1$). This information can be used by an adversary to develop a threshold-based strategy for identifying the sensor providing data at any given time. We can particularly adopt the following estimator for an adversary:
\begin{align} \label{eqn:policy_adv}
    \hat{\mathbf{s}}[k]=
    \begin{cases}
        1, & |C(\mathbf{x}[k]-\hat{\mathbf{x}}[k])|\geq \tau,\\
        2, & \mbox{otherwise}.
    \end{cases}
\end{align}
To select an appropriate threshold, we can select $\Xi=0$ and observe the probability of successful detection as a function of the threshold $\tau$. Figure~\ref{fig:TP_vs_tau} illustrates the probability of correctly detecting identity of the sensor at each time step $\mathbb{P}\{\hat{\mathbf{s}}[k]=\mathbf{s}[k]\}$, computed empirically across 100 runs, versus threshold $\tau$. Note that this probably does not become smaller than $1/2$, which is the success rate of purely guessing the identity of the sensor (the so-called dart throwing monkey). At $\tau=10^{-1}$ (which is nearly the optimal threshold), the policy in~\eqref{eqn:policy_adv} is accurate $2/3$ of the times, which is more than 30\% better than the baseline of purely guessing. Noting the simplicity of the adversarial estimation policy in~\eqref{eqn:policy_adv}, this is a remarkable feat. An interesting avenue for future research is to develop superior policies for identifying identity of the sensors using ideas from statistics and machine learning. 

\begin{figure}
    \begin{tikzpicture}
        \node[] at (0,0) {\includegraphics[width=1\linewidth]{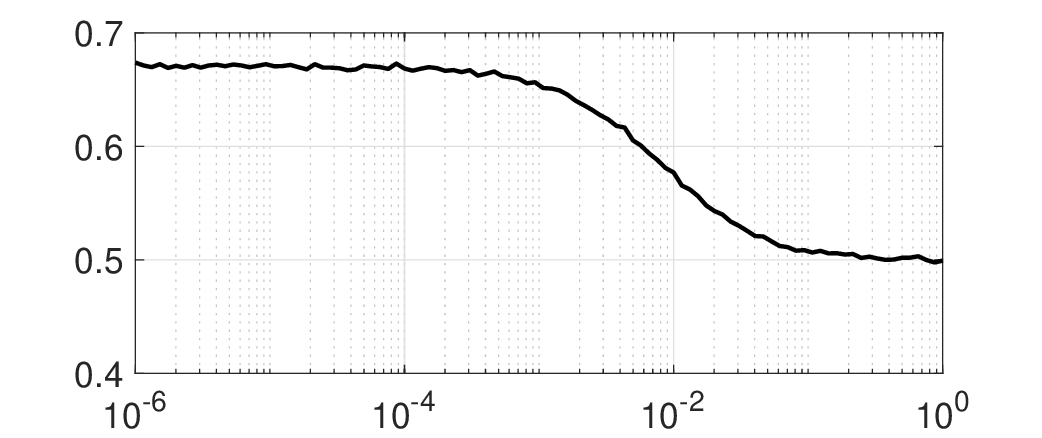}};
        \node[rotate=90] at (-4.1,0) {$\mathbb{P}\{\hat{\mathbf{s}}[k]=\mathbf{s}[k]\}$};
        \node[] at (0,-2.1) {$\sigma_\xi$}; 
    \end{tikzpicture}
    \vspace*{-7mm}
    \caption{Probability of correctly detecting identity of the sensor at each time step $\mathbb{P}\{\hat{\mathbf{s}}[k]=\mathbf{s}[k]\}$ versus magnitude of privacy-preserving noise $\sigma_\xi$.}
    \label{fig:TP_vs_noise}

    \begin{tikzpicture}
        \node[] at (0,0) {\includegraphics[width=1\linewidth]{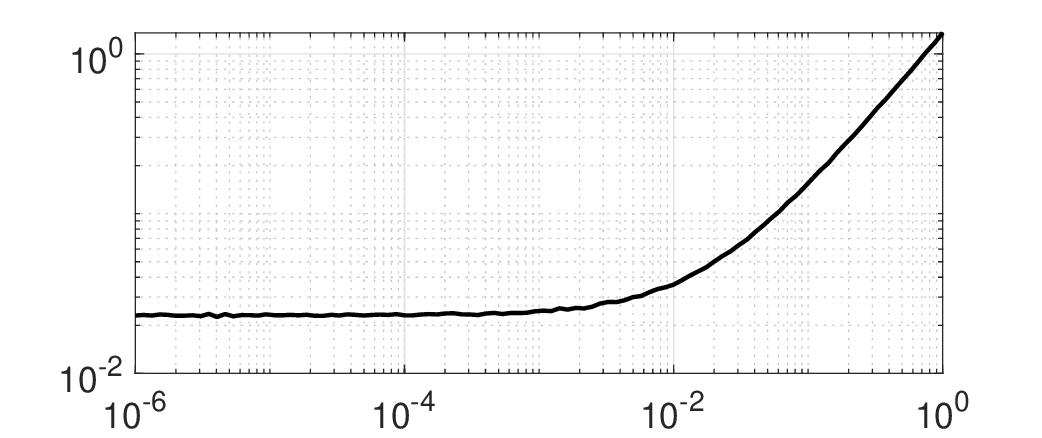}};
        \node[rotate=90] at (-4.1,0) {$\mathbb{E}\{\|\mathbf{x}[k]-\hat{\mathbf{x}}[k]\|_2^2\}$};
        \node[] at (0,-2.1) {$\sigma_\xi$};
    \end{tikzpicture}
    \vspace*{-7mm}
    \caption{State estimation error $\mathbb{P}\{\|\mathbf{x}-\hat{\mathbf{x}}\|_2^2\}$  versus magnitude of privacy-preserving noise $\sigma_\xi$.}
    \label{fig:error_vs_noise}
\end{figure}

Earlier, it was proved that the additive privacy-preserving noise $\boldsymbol{\xi}[k]$ can reduce the information leakage to be within any arbitrary range; see Corollary~\ref{cor:achieve}. This is demonstrated in the remainder of this section. Figure~\ref{fig:upper_bound} shows the upper bound for private information leakage $\mathfrak{I}(\Xi)$ developed in Proposition~\ref{prop:upperbound:information} versus magnitude of privacy-preserving noise, measured by its variance $\sigma_\xi$. This clear aligns with our analysis demonstrating that we can reduce the information leakage by increasing the magnitude of the noise. The effect of the noise can also be investigated empirically on the success the developed adversarial sensor estimation policy in~\eqref{eqn:policy_adv}. Figure~\ref{fig:TP_vs_noise} illustrates the probability of correctly detecting identity of the sensor at each time step $\mathbb{P}\{\hat{\mathbf{s}}[k]=\mathbf{s}[k]\}$ versus magnitude of privacy-preserving noise $\sigma_\xi$. Figures~\ref{fig:upper_bound} and~\ref{fig:TP_vs_noise} show remarkably similar trends. However, privacy preservation often comes at a cost. This can be seen through the effect of the privacy-preserving noise on the estimation error. Figure~\ref{fig:error_vs_noise} shows the state estimation error $\mathbb{E}\{\|\mathbf{x}[k]-\hat{\mathbf{x}}[k]\|_2^2\}$ versus magnitude of privacy-preserving noise $\sigma_\xi$. Figures~\ref{fig:TP_vs_noise} and~\ref{fig:error_vs_noise} illustrate the privacy-utility trade-off. As the magnitude of the privacy-preserving noise increases, the adversary would have a harder time to identify the identity of the sensors contributing to the state estimation but the estimation error also worsens. 

\section{Conclusions and Future Work}
\label{sec:conc}
We considered privacy-preserving state estimation for linear time-invariant dynamical systems with crowd sensors. Crowd sensors were modeled as a group of sensors with varying models that can be sampled randomly.  We used an additive privacy-preserving noise within a Luenberger-type observer to minimize information leakage, measured using mutual information. 
The results were demonstrated on a small numerical example. Future work can focus on experimental verification of the results and extension to nonlinear dynamics. 

\appendices

\section{Proof of Proposition~\ref{prop:upperbound:performance}}
\label{proof:prop:upperbound:performance}
Evidently, $\mathbb{E}\{\mathbf{e}[k]\}=0$ for all $k\in\mathbb{N}_0$. Therefore, 
\begin{align*}
    E[k+1]
    \!:=&\mathbb{E}\{\mathbf{e}[k+1]\mathbf{e}[k+1]^\top \}\\
    =&\mathbb{E}\{((A-LC_{\mathbf{s}[k]})\mathbf{e}[k]\!+\!Lv_{\mathbf{s}[k]}[k]\!+\!\boldsymbol{\xi}[k]\!-\!\mathbf{w}[k])\\
    &\times\!((A-LC_{\mathbf{s}[k]})\mathbf{e}[k]\!+\!Lv_{\mathbf{s}[k]}[k]\!+\!\boldsymbol{\xi}[k]\!-\!\mathbf{w}[k])^\top\! \}\\
    =& \mathbb{E}\{(A-LC_{\mathbf{s}[k]})E[k] (A-LC_{\mathbf{s}[k]})^\top\}\\
    &+L\overline{V}L^\top+\Xi+W,
\end{align*}
where $\overline{V}=\mathbb{E}\{v_{\mathbf{s}[k]}[k]v_{\mathbf{s}[k]}[k]^\top\}=\sum_{i=1}^m \mathbb{P}\{\mathbf{s}[k]=i\}V_{i}=\sum_{i=1}^m p(s)V_{i}$. Finally, note that the mapping $E\mapsto \mathbb{E}\{(A-LC_{\mathbf{s}[k]})E[k] (A-LC_{\mathbf{s}[k]})^\top\}$ is contractive if $A\!-\!LC_{i}$ are Schur matrices for all $i\in\S$. Therefore, the convergence immediately follows from the Banach fixed point theorem~\cite[p.\,3]{pata2019fixed}.

\section{Proof of Proposition~\ref{prop:upperbound:information}}
\label{proof:prop:upperbound:information}
    The chain rule for mutual information~\cite[Theorem~8.6.2]{coverelements} gives
    \begin{align}
        I(\mathbf{s}[0:k];&\mathbf{\hat{x}}[0:k]|\mathbf{x}[0:k])\nonumber\\
        =& I(\mathbf{s}[0:k];\mathbf{\hat{x}}[0]|\mathbf{x}[0:k])\nonumber\\
        &+I(\mathbf{s}[0:k];\mathbf{\hat{x}}[1]|\mathbf{\hat{x}}[0],\mathbf{x}[0:k])\nonumber\\
        &+I(\mathbf{s}[0:k];\mathbf{\hat{x}}[2]|\mathbf{\hat{x}}[0:1],\mathbf{x}[0:k])\nonumber\\
        &+\dots\nonumber\\
        &+I(\mathbf{s}[0:k];\mathbf{\hat{x}}[k]|\mathbf{\hat{x}}[0:k-1],\mathbf{x}[0:k]).\label{eqn:proof:1}
    \end{align}
    For any $0\leq t\leq k$, we have
    \begin{align}
        I(\mathbf{s}[0:k];&\mathbf{\hat{x}}[t]|\mathbf{\hat{x}}[0:t-1],\mathbf{x}[0:k])\nonumber\\
        =&h(\mathbf{\hat{x}}[t]|\mathbf{\hat{x}}[0:t-1],\mathbf{x}[0:k])\nonumber\\
        &-h(\mathbf{\hat{x}}[t]|\mathbf{\hat{x}}[0:t-1],\mathbf{s}[0:k],\mathbf{x}[0:k])\nonumber\\
        =&h(\mathbf{\hat{x}}[t]|\mathbf{\hat{x}}[t-1],\mathbf{x}[t-1])\nonumber\\
        &-h(\mathbf{\hat{x}}[t]|\mathbf{\hat{x}}[t-1],\mathbf{s}[t-1],\mathbf{x}[t-1])\nonumber\\
        =&I(\mathbf{s}[t-1];\mathbf{\hat{x}}[t]|\mathbf{\hat{x}}[t-1],\mathbf{x}[t-1]),
        \label{eqn:proof:2}
    \end{align}
    where the first and the last equality follow from the relationship between mutual information and entropy~\cite[p.\,251]{coverelements}, and the second equality follows from the specific structure of the estimator in~\eqref{eqn:estimator} and the i.i.d. nature of the sensor selections in Assumption~\ref{assum:sensor_select}. Also note that $I(\mathbf{s}[0:k];\mathbf{\hat{x}}[0]|\mathbf{x}[0:k])=0$ because $\mathbf{s}[0:k]$ and $\mathbf{\hat{x}}[0]$ are statistically independent. Combining~\eqref{eqn:proof:1} and~\eqref{eqn:proof:2} shows that
    \begin{align}
    I(\mathbf{s}[0:k]&;\mathbf{\hat{x}}[0:k]|\mathbf{x}[0:k])\nonumber
        \\=& \sum_{t=1}^{k} I(\mathbf{s}[t-1];\mathbf{\hat{x}}[t]|\mathbf{\hat{x}}[t-1],\mathbf{x}[t-1]).
    \end{align}
    Now, we focus on developing an upper bound for each term  $I(\mathbf{s}[t-1];\mathbf{\hat{x}}[t]|\mathbf{\hat{x}}[t-1],\mathbf{x}[t-1])$. To do so, note that
    $I(\mathbf{s}[t-1];\mathbf{\hat{x}}[t]|\mathbf{\hat{x}}[t-1],\mathbf{x}[t-1])
        =h(\mathbf{\hat{x}}[t]|\mathbf{\hat{x}}[t-1],\mathbf{x}[t-1])
        -h(\mathbf{\hat{x}}[t]|\mathbf{\hat{x}}[t-1],\mathbf{s}[t-1],\mathbf{x}[t-1]).$
    Therefore, upper bounding $I(\mathbf{s}[t-1];\mathbf{\hat{x}}[t]|\mathbf{\hat{x}}[t-1],\mathbf{x}[t-1])$ can be achieved for finding an upper bound for $h(\mathbf{\hat{x}}[t]|\mathbf{\hat{x}}[t-1],\mathbf{x}[t-1])$ and a lower bound for $h(\mathbf{\hat{x}}[t]|\mathbf{\hat{x}}[t-1],\mathbf{s}[t-1],\mathbf{x}[t-1])$. Let us start with finding a lower bound for $h(\mathbf{\hat{x}}[t]|\mathbf{\hat{x}}[t-1],\mathbf{s}[t-1],\mathbf{x}[t-1])$. Note that, because $\mathbf{\hat{x}}[k]=A\mathbf{\hat{x}}[k-1]+L(C_{\mathbf{s}[k-1]}\mathbf{x}[k-1]+\mathbf{v}_{\mathbf{s}[k-1]}[k-1]-C_{\mathbf{s}[k-1]}\mathbf{\hat{x}}[k-1])+\boldsymbol{\xi}[k-1]$, we get 
    \begin{align*}
    h(\mathbf{\hat{x}}[t]|\mathbf{\hat{x}}[t-1],&\mathbf{s}[t-1]\!=\!s,\mathbf{x}[t-1])\\
    &=h(L\mathbf{v}_{s}[t-1]+\boldsymbol{\xi}[t-1])\\
    &=\frac{1}{2}\ln((2\pi e)^{n}\det(LV_{s}L^\top+\Xi)),
    \end{align*}
    and, as a result,
    \begin{align*}
    h(\mathbf{\hat{x}}&[t]|\mathbf{\hat{x}}[t-1],\mathbf{s}[t-1],\mathbf{x}[t-1])
    \\
    &= \sum_{s\in\mathcal{S}}p(s)h(\mathbf{\hat{x}}[t]|\mathbf{\hat{x}}[t-1],\mathbf{s}[t-1]=s,\mathbf{x}[t-1])\\
    &=\frac{1}{2}\sum_{s\in\mathcal{S}}\ln((2\pi e)^{n}\det(LV_{s}L^\top+\Xi))p(s)\\
    &=\frac{1}{2}\ln((2\pi e)^{n})+\frac{1}{2}\sum_{s\in\mathcal{S}}p(s)\ln(\det(LV_{s}L^\top+\Xi)).
    \end{align*}
    Now, we focus on finding an upper bound for $h(\mathbf{\hat{x}}[t]|\mathbf{\hat{x}}[t-1],\mathbf{x}[t-1])$. To do so, note the inequality~\eqref{eqn:long:1}, on top of the next page.
    \begin{figure*}
    \begin{align}
    h(\mathbf{\hat{x}}[t]|\mathbf{\hat{x}}[t-1],\mathbf{x}[t-1])
    &=h(A\mathbf{\hat{x}}[t-1]+L(C_{\mathbf{s}[t-1]}\mathbf{x}[t-1]+\mathbf{v}_{\mathbf{s}[t-1]}[t-1]-C_{\mathbf{s}[t-1]}\mathbf{\hat{x}}[t-1])+\boldsymbol{\xi}[t-1]|\mathbf{\hat{x}}[t-1],\mathbf{x}[t-1])\nonumber
    \\
    &=h(LC_{\mathbf{s}[t-1]}\mathbf{x}[t-1]+L\mathbf{v}_{\mathbf{s}[t-1]}[t-1]+\boldsymbol{\xi}[k-1]-LC_{\mathbf{s}[t-1]}\mathbf{\hat{x}}[t-1]|\mathbf{\hat{x}}[t-1],\mathbf{x}[t-1]),
    \label{eqn:long:1}
    \end{align}
    \hrule 
    \end{figure*}
    Define $\mathbf{z}=L\mathbf{v}_{\mathbf{s}[t-1]}[t-1]+\boldsymbol{\xi}[k-1]-LC_{\mathbf{s}[t-1]}(\mathbf{\hat{x}}[t-1]-\mathbf{x}[t-1])$. We have
    \begin{align*}
        \overline{ \mathbf{z}}:=\mathbb{E}\{\mathbf{z}|\mathbf{\hat{x}}[t-1],\mathbf{x}[t-1]
        \}
        =L\overline{C}(\mathbf{\hat{x}}[t-1]-\mathbf{x}[t-1])
    \end{align*}
    where $\overline{C}=\mathbb{E}\{C_{\mathbf{s}[t-1]}\}$. Furthermore,
    \begin{align*}
        \mathbb{E}\{(\mathbf{z}&-\overline{\mathbf{z}})(\mathbf{z}-\overline{\mathbf{z}})^\top|\mathbf{\hat{x}}[t-1],\mathbf{x}[t-1]
        \}
        \\\leq& L\overline{V}L^\top+\Xi
        \\
        &+L\mathbb{E}\{\Delta C\mathbf{e}[t-1]\mathbf{e}[t-1]^\top \Delta C|\mathbf{e}[t-1] \}L^\top 
    \end{align*}
    where $\overline{V}=\sum_{s}V_sp(s)$ and $\Delta C=\overline{C}-C_{\mathbf{s}[t-1]}$.
    Noting that Gaussian random variables have the highest entropy among all random variables with a given covariance~\cite[]{}, we get
    \begin{align*}
        h(\mathbf{z}|&\mathbf{\hat{x}}[t-1],\mathbf{x}[t-1])
        \\
        \leq& \frac{1}{2}\ln((2\pi e)^{n})\\
        &+\frac{1}{2}
        \mathbb{E}\{
        \ln(\det(L\overline{V}L^\top+\Xi
        \\
        &\hspace{.5in}+L\mathbb{E}\{\Delta C\mathbf{e}[t-1]\mathbf{e}[t-1]^\top \Delta C|\mathbf{e}[t-1] \}L^\top))\}
        \\
        \leq& \frac{1}{2}\ln((2\pi e)^{n})\\
        &+\frac{1}{2}
        \ln(\det(L\overline{V}L^\top+\Xi
        \\
        &\hspace{.5in}+L\mathbb{E}\{\Delta C\mathbf{e}[t-1]\mathbf{e}[t-1]^\top \Delta C \}L^\top))
        \\
        \leq& \frac{1}{2}\ln((2\pi e)^{n})\\
        &+\frac{1}{2}
        \ln(\det(L\overline{V}L^\top+\Xi
        +L\mathbb{E}\{\Delta CE[t-1] \Delta C \}L^\top)).
    \end{align*}
    Combining all these inequalities results in
    \begin{align*}
        I(\mathbf{s}[0:k];&\mathbf{\hat{x}}[0:k]|\mathbf{x}[0:k])\\
        \leq &\frac{k}{2}
        \ln(\det(L\overline{V}L^\top+\Xi
        +L\mathbb{E}\{\Delta CE[t-1] \Delta C \}L^\top))\\
        &-\frac{k}{2}\sum_{s\in\mathcal{S}}p(s)\ln(\det(LV_{s}L^\top+\Xi)).
    \end{align*}
    This concludes the proof.

\section{Proof of Corollary~\ref{cor:achieve}}
\label{proof:cor:achieve}
Note that if $\lim_{\Xi=\lambda I, \lambda\rightarrow\infty} \mathfrak{I}(\Xi)= 0$, the statement of this corollary holds (sufficient result). Let $A=L\overline{V}L^\top+L\mathbb{E}\{\Delta C E^*\Delta C \}L^\top$ and $B_s=LV_{s}L^\top$. Hence,
\begin{align*}
    \mathfrak{I}(\Xi)
    \leq &\frac{1}{2}
    \ln(\det(A+\Xi))-\frac{1}{2}\sum_{s\in\mathcal{S}}p(s)\ln(\det(B_s+\Xi))\\
    =& \frac{1}{2}\sum_{s\in\mathcal{S}}p(s)
    \ln(\det((B_s+\Xi)^{-1}(A+\Xi)))\\
    =& \frac{1}{2}\sum_{s\in\mathcal{S}}p(s)
    \trace((B_s+\Xi)^{-1}(A+\Xi)-I)\\
    \leq & \frac{1}{2}
    \trace(\Xi^{-1}(A+\Xi)-I)
    \\
    \leq & \frac{1}{2}
    \trace(\Xi^{-1}A)
    \\
    \leq & \frac{\lambda^{-1}}{2}
    \trace(A).
\end{align*}
As a result, $\lim_{\Xi=\lambda I, \lambda\rightarrow\infty}\mathfrak{I}(\Xi)\leq 0$, which proves the result noting that $\mathfrak{I}(\Xi)\geq 0$.





\bibliography{ref}
\bibliographystyle{ieeetr}

\end{document}